\newtheorem*{NMA2}{Two-region NMA theorem}
\newtheorem*{NMA2B}{Two-ball NMA theorem}
\newtheorem*{NMA}{Corollary: General NMA theorem}
\newtheorem*{USA}{Unique source assignment theorem}
\newtheorem*{Kellogg}{Kelloggs theorem 10.5}
\newcommand{\rev}[1]{ #1 }
\title{A  uniqueness theorem in potential theory with implications for tomography-assisted inversion}
\author[1]{Karl Fabian}
\author[2]{Lennart V. de Groot}
\affil[1]{  Geological Survey of Norway, Leiv Eirikssons vei~39, \emph{7491}~Trondheim, Norway}
\affil[2]{Paleomagnetic laboratory Fort Hoofddijk, Department of Earth Sciences, Utrecht University, Budapestlaan 17, 3584~CD~Utrecht, The~Netherlands.}
\date{\today}
\begin{document}

\label{firstpage}

\maketitle

\begin{abstract}
Inversion of potential field data is central for remote sensing in physics, geophysics, neuroscience
and medical imaging.
Potential-field inversion results are improved by including constraints from independent measurements, like tomographic source localization, but so far
no mathematical theorem  guarantees  that such prior information can yield uniqueness of the achieved assignment.
Standard potential theory is used here to prove a uniqueness theorem which completely characterizes the mathematical background of source-localized inversion. It guarantees for an astonishingly large class of source localizations that it is possible by potential field measurements on a surface to differentiate between signals from prescribed source regions. The well-known general non-uniqueness of potential field inversion only prevents that the source distribution  within the individual regions can be uniquely recovered.
This result enables large scale surface scanning to reconstruct reliable magnetization directions of localized magnetic particles and provides an incentive to improve scanning methods for paleomagnetic applications.
\end{abstract}

\section{Introduction}
It is long known that  a charge distribution inside a sphere cannot be uniquely reconstructed from
potential field measurements on or outside this sphere, because every charge distribution inside can be replaced by an equivalent surface charge distribution creating the same outside potential \citep{Kellogg:1929}.
\rev{
When  inverting magnetic  field surface measurements, all mathematical approaches make substantial additional
assumptions about the source magnetization to achieve   useful reconstructions \citep[see e.g.][]{Zhdanov:2015,Baratchart:2013}.
}
To still infer localized information in spite of this non-uniqueness, we previously  suggested  to constrain the  source regions inside a region $\Omega$ by additional tomographic information \citep{DeGroot:2018}.
\rev{
The corresponding inversion algorithm turned out to be extremely successful and efficient which seemed to deserve a mathematical underpinning.
This led to  a new type of inversion problem, namely to assign parts of the total measured signal to  charge distributions inside  regions $P_1,\ldots,P_N$ that beforehand have been tomographically outlined.
}
Is it now still possible that some  non-zero charge distribution, for example inside particles $P_1, P_2, P_4, P_5$ in Fig.~\ref{Sketch-1}a,  creates exactly the same measurement signal as another charge distribution inside the omitted particle $P_3$?
Here it is shown that  this is not the case for regions $P_1,\ldots,P_N$ which are topologically separated in a sense specified below. Accordingly, a potential field measurement at the surface of $\Omega$
can be uniquely decomposed into  signals from  such individual, preassigned source regions. By that, the inevitable non-uniqueness of potential-field  inversion turns out to be  completely constrained to the uncertainty of the internal source distribution within these individual regions.

\begin{figure}
\centering
 \includegraphics[width=80mm]{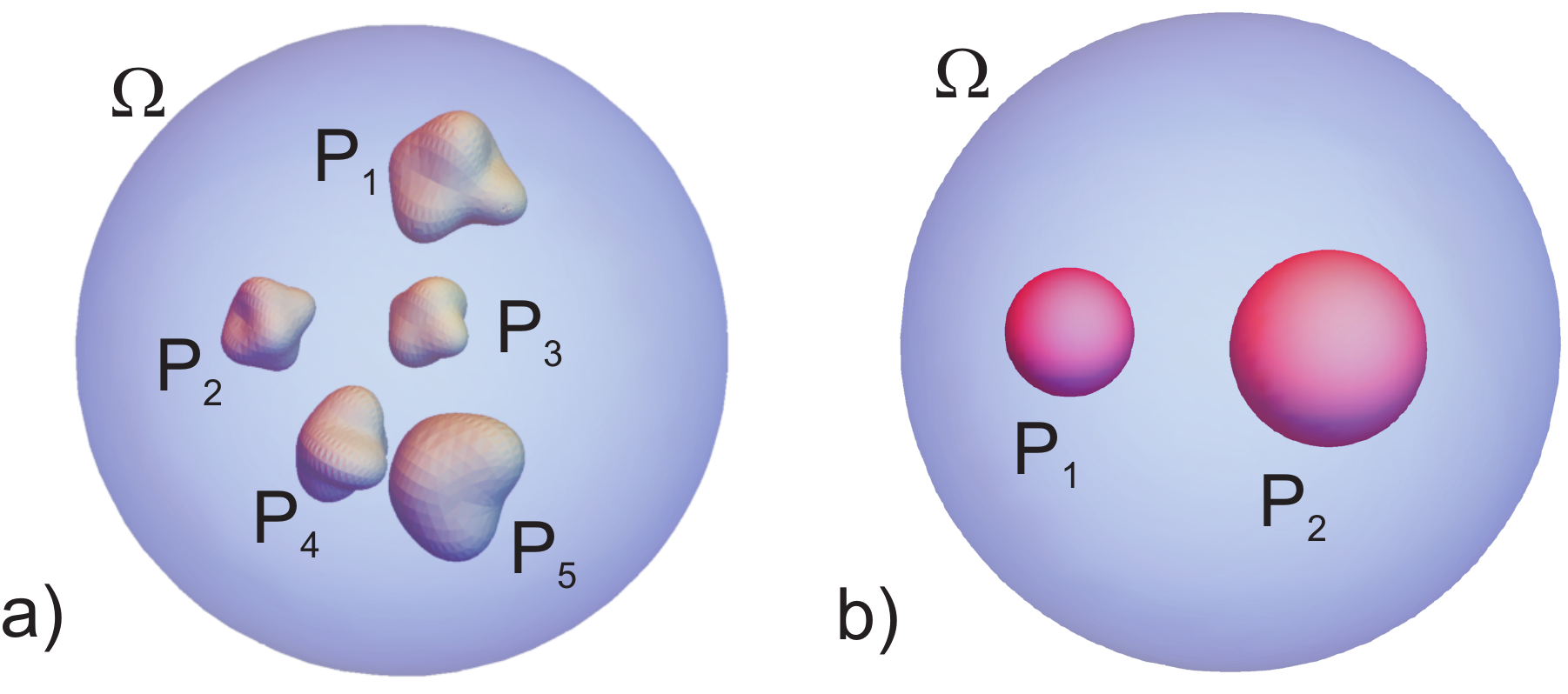}
\caption{Geometric situation a) in the general case, and b) for the simplified model.  }\label{Sketch-1}
\end{figure}

To show this result, it is first proved  that there is no non-zero charge distribution in one region, that annihilates the signal of a charge distribution in another, topologically separated region. This is the content of the No-Mutual-Annihilator theorem in section ~\ref{NMA}.

From that the main theorem on unique source assignment in section~\ref{USA} follows directly by the linearity of the von Neumann boundary value problem for the Poisson equation. Therefore, the main mathematical content is encapsulated in the  two-region NMA theorem, that can be regarded as a \rev{ far reaching} generalization of a theorem of Gauss about separating the internal and external components of the geomagnetic field \citep{Gauss:1877,Backus:1996}.\rev{  It} essentially relies on the fact that harmonic functions are analytic and can be uniquely analytically continued on simply connected open sets \citep[Theorem 1.27]{Axler:2001}.

\section{The No-Mutual-Annihilator theorem}\label{NMA}

Let $\Omega \subset \mathbb{R}^3$ be open and $\partial \Omega$ a nonempty, smooth compact manifold.
For a set   $G$ with   $\overline{G}\subset\Omega$  the (Neumann) annihilator of $G$ in $\partial \Omega$ is defined as
\begin{eqnarray*}
&&{\rm Ann} (G) := \left\{ \rho \in L^1(G): {\rm supp}\,\rho \subset \mathring{G},\right.
\\~&~&\left. \exists \Phi\in C^2(\Omega)\cap C^1(\overline{\Omega}):
\Delta\,\Phi = \rho {~~\rm and~~} \frac{\partial \Phi}{\partial n}  = 0 {\rm~on~}\partial \Omega\right\}.
\end{eqnarray*}

Physically, ${\rm Ann} (G)$ represents the vector space of all possible charge distributions inside the region $G$ which create no measurement signal on the boundary $\partial \Omega$. Because the measurement signal is the normal derivative of the potential field, the potential itself is only defined up to a globally constant summand, and in the following this constant is chosen such that the analytic continuation of $\Phi$ to $\mathbf{R}^3$ vanishes at infinity. The corresponding potentials are called zero-gauged.

 $N$ pairwise disjoint compact sets $P_1,\ldots,P_N$ with $P_i\subset \Omega$
have the {\em no-mutual-annihilator} (NMA) property if
$$ {\rm Ann}(\bigcup \limits_{i=1}^N P_i) ~=~\bigoplus\limits_{i=1}^N {\rm Ann} (P_i). $$

In the above equation the ''$\supset$'' inclusion is always true, because any element of the vector space spanned by the annihilators of the $P_i$
is an annihilator of the union $\bigcup \limits_{i=1}^N P_i$.
The other inclusion ''$\subset$'' in the NMA property implies, that it is impossible to have a charge distribution $\rho$ within the region $\bigcup \limits_{i=1}^N P_i$ which generates a zero signal on the boundary, such that if the charge distribution is set to zero in some, but not all, of the $P_i$, the resulting boundary signal is not zero.\\

An example of two sets which do not have the NMA property are two nested balls
$P_1=B(r)$ and $P_2=B(R)\backslash B(r)$ for $0<r<R$. A well-known annihilator in this case are constant non-zero charge distributions of opposite sign such that the integral over $B(R)$ is zero \citep{Zhdanov:2015}. Setting the charge distribution in one of $P_1,P_2$ to zero clearly generates a non-zero field on $\partial \Omega$.

Intuitively it appears plausible that two point charges inside a sphere, which lie far apart from each other, but close to the surface of the sphere do have the NMA property. At least if the charge distribution inside one of them has a non-zero total charge, then the other must have the opposite total charge to annihilate the field at large distance, but at small distance  on the surface $\partial \Omega$ these charges cannot cancel each other.

It is also known that the annihilator sets ${\rm Ann} (G)$  for $\overline{G}\subset \Omega$ are large.
For   star-shaped  $G$,
any charge distribution $\rho\in L^1(G)$   which for all harmonic functions $h\in C^2(\Omega): \Delta\,h\,=\,0$ fulfills
$$  \int \limits_G h(r)\,\rho(r)\,dV \,=\, 0 $$
 generates no field on $\partial \Omega$, such that $\rho\in {\rm Ann} (G)$ \citep{Zhdanov:2015}.
This apparently bleak state of affairs with respect to unique-inversion results is emphasized by the fact that \citep{Zhdanov:2015} reports as the best result so far that if  a gravity field is generated by a star-shaped body  of constant density $\rho(r) = \rho_0$, the gravity inverse problem has a unique solution \citep{Novikov:1938}.

It therefore may appear incredible that a far-reaching uniqueness result, as claimed above, is not in conflict with the known non-uniqueness results. We will now show that it is mathematically feasible. To make the proof easier to follow, it is first shown under relatively weak topological conditions that two disjoint regions have the NMA property. By induction this is then generalized to a finite number of $N$ regions. The essential property to avoid non-uniqueness of the inversion is that the complement of these finite regions is a simply connected region where analytical continuation is uniquely possible. Thereby uniqueness of the inversion is closely linked to uniqueness of analytical continuation.

\begin{NMA2}
Let $\Omega \subset \mathbf{R}^3$ be open and $\partial \Omega$ a smooth compact manifold and
 $P_1, P_2  \subset \Omega$ be disjoint compact sets, such that $\mathbb{R}^3\backslash P_1$, $\mathbb{R}^3\backslash  P_2$, and $\mathbb{R}^3\backslash (P_1 \cup P_2)$ are simply connected   then
$P_1$ and $P_2$ have the {\em No-Mutual-Annihilator} property with respect to $\Omega$.\\
\end{NMA2}

\begin{figure*}
\centering
 \includegraphics[width=120mm]{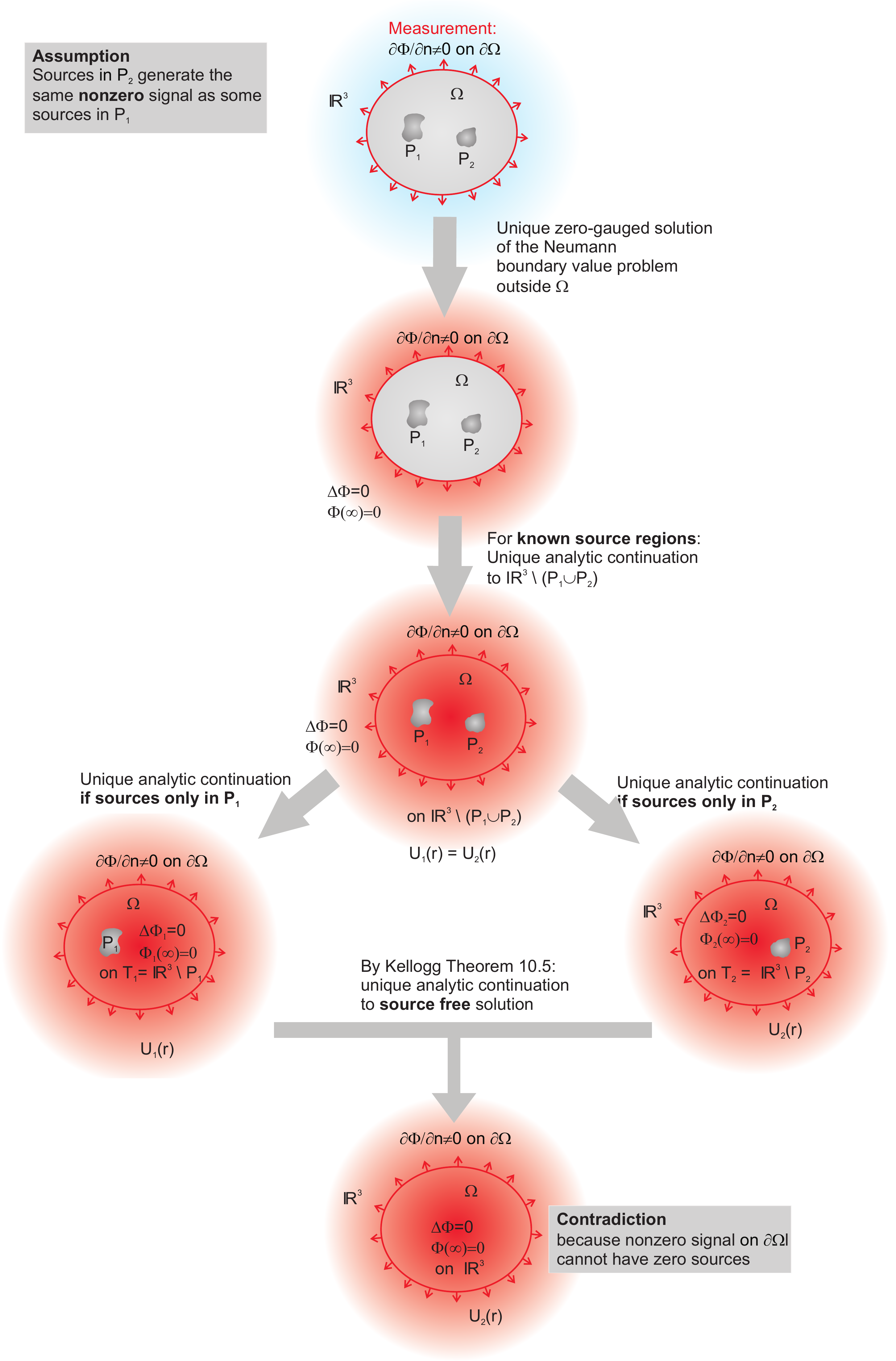}
\caption{\rev{Overview of the proof of the two-region NMA theorem. The assumption that sources in region $P_1$ generate the same nonzero field on the surface $\partial \Omega$ as sources inside region $P_2$ leads to a contradiction
if $ T_1$ and $T_2$ are simply connected.}
}\label{Sketch-2}
\end{figure*}

\begin{proof}
We derive a contradiction from the assumption that there exists a mutual annihilator
\rev{
$$\rho \in {\rm Ann}( P_1 \cup P_2) \backslash  ( {\rm Ann} (P_1) \oplus {\rm Ann} (P_2)).$$
}
By definition, then there are two nonzero  functions $\rho_1,\rho_2 \in  L^1(\Omega)$ with  ${\rm supp}\,\rho_1 \subset P_1$, ${\rm supp}\,\rho_2 \subset P_2$, such that
$$ \rho = \rho_1-\rho_2,$$
and the non-zero normal derivatives of their potentials $\frac{\partial \Phi_1}{\partial n}, \frac{\partial \Phi_2}{\partial n}$ are identical on $\partial\Omega$.
Now recall that  the solution of the Neumann problem for harmonic functions is unique for zero-gauged potentials \citep[Theorem 8.4]{Kellogg:1929}, by which $\Phi_1=\Phi_2$ on $\mathbb{R}^3\backslash \Omega $, where
a potential $U$ is called zero-gauged, if
$$\lim \limits_{||x|| \to \infty} U(x) ~=~0.$$
We  now conjure up a bit of mathematical magic in form of  Theorem 10.5 in \citep{Kellogg:1929} which essentially encapsulates Gauss theorem of separation of sources.
By assumption, the sets  $T_1:=\mathbb{R}^3 \backslash P_1$
and $T_2 := \mathbb{R}^3\backslash P_2$ are simply connected and open and overlap   on the simply connected set $\mathbb{R}^3\backslash ( P_1 \cup P_2)$.
By analytic continuation
  on the simply connected open sets $T_1$ and $T_2$  \citep[Theorem 1.27]{Axler:2001} there is a  unique harmonic
function $U_1$ on $T_1$ with $U_1= \Phi_1$ on $\mathbb{R}^3\backslash \Omega$, and a unique $U_2$ on $T_2$ with $U_2= \Phi_2$ on $\mathbb{R}^3\backslash \Omega$.
By \citep[Theorem 10.5 ]{Kellogg:1929}, there now also is a unique harmonic function $U$ on
$\mathbb{R}^3$ with $U=U_1$ on $T_1$ and $U=U_2$ on $T_2$. This implies that $U$ solves the zero-gauged Neumann problem $\Delta U=0$ on $\mathbb{R}^3$  with boundary condition $\frac{\partial U}{\partial n}=\frac{\partial \Phi_1}{\partial n}$ on $\partial \Omega$. Because the unique  zero-gauged potential with $\Delta U=0$ on $\mathbb{R}^3$ is $U=0$,
it follows that $\rho_1=\rho_2=0$ which contradicts the assumption.
\end{proof}

Because the above proof is quite mathematical in nature, in the supplementary information the  special case of a two-ball NMA theorem, in which $P_{1,2}$ are disjoint balls as in Fig.~\ref{Sketch-1}b,  is  proved by directly applying   Gauss theorem of separation of sources. This may help to acquire a physical understanding of the strength and limitations of the result, and  may also lend more credulity to the derivation above. In the next step the result of the two-region NMA theorem is extended to arbitrary numbers of regions by induction.

\begin{NMA}
Let $\Omega \subset \mathbf{R}^3$ be open and $\partial \Omega$ a smooth compact manifold.
For a natural number $N\geq 1$ let
 $P_1,\ldots,P_N  \subset \Omega$  be pairwise  disjoint compact sets, such that
 $\mathbb{R}^3\backslash P_k$ and  $\mathbb{R}^3\backslash \bigcup \limits_{i=1}^k P_i$
 are simply connected for all $k=1,\ldots,N$.
 Then
the $P_i$ have the {\em No-Mutual-Annihilator} property with respect to $\Omega$.
\end{NMA}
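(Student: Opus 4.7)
The natural approach is induction on $N$, using the two-region NMA theorem as the engine for the inductive step. The base case $N=1$ is immediate from the definition, and $N=2$ is exactly the two-region NMA theorem just proved. For the inductive step, assume the claim holds for $N-1$ regions, set $Q:=\bigcup\limits_{i=1}^{N-1}P_i$, and plan to apply the two-region theorem to the pair $(Q,P_N)$, then to unfold ${\rm Ann}(Q)$ via the inductive hypothesis.

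To invoke the two-region theorem on $(Q,P_N)$, I must check that $Q$ and $P_N$ are disjoint compact subsets of $\Omega$ and that $\mathbb{R}^3\setminus Q$, $\mathbb{R}^3\setminus P_N$, and $\mathbb{R}^3\setminus(Q\cup P_N)$ are simply connected. Compactness of $Q$ is automatic as a finite union of compact sets; disjointness from $P_N$ and inclusion in $\Omega$ are inherited from the pairwise disjointness and containment of the $P_i$. Since $Q\cup P_N=\bigcup\limits_{i=1}^N P_i$, the three topological conditions are precisely the $k=N-1$ and $k=N$ instances of the hypothesis already assumed. The two-region theorem therefore yields
$$ {\rm Ann}\!\left(\bigcup\limits_{i=1}^N P_i\right) \;=\; {\rm Ann}(Q)\,\oplus\,{\rm Ann}(P_N). $$
The inductive hypothesis applied to $P_1,\ldots,P_{N-1}$, whose required simple-connectedness conditions are a subset of those assumed for $N$, then gives ${\rm Ann}(Q)=\bigoplus\limits_{i=1}^{N-1}{\rm Ann}(P_i)$, and substitution delivers the claim. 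Directness of every sum is automatic because the summands have supports in pairwise disjoint compact sets: $\sum_i\rho_i=0$ with ${\rm supp}\,\rho_i\subset P_i$ forces each $\rho_i\equiv 0$.

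I do not foresee a genuine obstacle: the substantive content is entirely packaged in the two-region NMA theorem, and the induction is essentially bookkeeping. The hypotheses at level $N$ have been phrased precisely so that they simultaneously supply the inductive hypothesis on $P_1,\ldots,P_{N-1}$ and the three simple-connectedness conditions needed for the two-region application to $(Q,P_N)$; the only care required is to verify these transfers explicitly, as above.
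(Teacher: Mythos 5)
Your proposal is correct and follows essentially the same route as the paper: induction on $N$, applying the two-region NMA theorem to the pair $\bigl(\bigcup_{i=1}^{N-1}P_i,\,P_N\bigr)$ and then unfolding ${\rm Ann}\bigl(\bigcup_{i=1}^{N-1}P_i\bigr)$ via the inductive hypothesis. Your explicit verification that the required simple-connectedness conditions are exactly the $k=N-1$ and $k=N$ instances of the hypothesis is a useful detail the paper leaves implicit, but the argument is the same.
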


\begin{proof}
For $N=1$ there is nothing to prove. Assume that $N>1$ and that the corollary is true for $N-1$. Define the sets $P_1'=\bigcup \limits_{i=1}^{N-1} P_i$ and $P_2'=P_N$.
The assumptions on the $P_k$ imply that $P_1'$ and $P_2'$ fulfill the conditions
to apply the two-region NMA theorem, whereby
$P_1'$ and $P_2'$ have the {\em No-Mutual-Annihilator} property with respect to $\Omega$
which implies
\rev{
$$ {\rm Ann}(\bigcup \limits_{i=1}^N P_i) ~=~{\rm Ann}(\bigcup \limits_{i=1}^{N-1} P_i)
\oplus  {\rm Ann} (P_N). $$
Because the corollary is true for $N-1$ and $P_1,\ldots,P_{N-1}$ fulfill the conditions for its application we have by induction
$$ {\rm Ann}(\bigcup \limits_{i=1}^{N-1} P_i) ~=~\bigoplus \limits_{i=1}^{N-1} {\rm Ann} (P_i). $$
}
Substituting this in the above equation proves the corollary.
\end{proof}

\section{Unique source assignment}\label{USA}
The previous two theorems provide all prerequisites to formulate the main result of this article:
\begin{USA}
Let $\Omega \subset \mathbb{R}^3$ be open, simply connected, and $\partial \Omega$ a smooth compact manifold. Assume that
$P_1,\ldots,P_N  \subset \Omega$  are pairwise  disjoint compact sets such that
$\mathbb{R}^3\backslash P_k$ and  $\mathbb{R}^3\backslash \bigcup \limits_{i=1}^k P_i$
 are simply connected for all $k=1,\ldots,N$.
If the sources of the  zero-gauged potential  $\Phi$ have compact support on $\bigcup \limits_{k=1}^N P_k $, then  $\frac{\partial \Phi}{\partial n}$ on $\partial \Omega$ uniquely determines zero-gauged  potentials
$\Phi_1,\ldots,\Phi_N$,
such that $\Phi_i$ is harmonic on $\mathbb{R}^3 \backslash \bigcup \limits_{k\neq i} P_k $, which implies that it has no sources outside $P_i$, and
$$\frac{\partial \Phi}{\partial n}~=~ \sum  \limits_{i=1}^N \frac{\partial \Phi_i}{\partial n} ~~{\rm on}~~\partial \Omega.$$
\end{USA}

\begin{proof}
Because the source of $\Phi$ is a charge distribution $\rho$ in $\bigcup \limits_{k=1}^N P_k $  there exist zero-gauged harmonic potentials
$\Phi_1,\ldots,\Phi_N$ with the required properties, namely those generated by the local charge distributions $\rho_k=\rho|_{P_k}$.\\

Uniqueness  is now shown by the general NMA theorem.
Take any charge distribution  $\rho'$ in $\bigcup \limits_{k=1}^N P_k $
with zero-gauged potentials $\Psi_1,\ldots,\Psi_N$,
 such that $\Psi_i$ is harmonic on $\mathbb{R}^3 \backslash \bigcup \limits_{k\neq i} P_k $ and
 $$\frac{\partial \Phi}{\partial n}~=~ \sum  \limits_{i=1}^N \frac{\partial \Psi_i}{\partial n} ~~{\rm on}~~\partial \Omega.$$
 Then define $\Gamma_i= \Phi_i-\Psi_i$ such that $\Gamma$ with

 $$\Gamma:= \sum  \limits_{i=1}^{N}  \Gamma_i   ~=~0~~{\rm on}~~\mathbb{R}^3\backslash \Omega,~~{\rm and}~~\frac{\partial \Gamma}{\partial n}~=~ 0 ~~{\rm on}~~\partial \Omega,$$
is the zero-gauged potential
 from the source distribution $\rho-\rho'$, which thereby is a member of
 \rev{
$$ {\rm Ann}(\bigcup \limits_{i=1}^NP_i) ~=~\bigoplus \limits_{i=1}^N {\rm Ann} (P_i). $$
}
The equality is due to the general NMA theorem and its right hand side implies that
$ \Gamma_i   ~=~0$, or $\Phi_i=\Psi_i$ for $i=1,\ldots,N$. Thus the zero-gauged $\Phi_i$ are uniquely determined by $\frac{\partial \Phi}{\partial n}$ on $\partial \Omega$.
\end{proof}

\subsection{Unique source assignment is well-posed}
When denoting by $H_0(\mathbb{R}^3\backslash P)$ the space of harmonic, zero-gauged functions outside a compact region $P$, the  linear operator for solving the inverse problem
$$A:  H_0(\mathbb{R}^3\backslash(P_1\cup P_2))\to  H_0(\mathbb{R}^3\backslash P_1 ),~~\Phi\mapsto~\Phi_1 $$
has the nullspace   $H_0(\mathbb{R}^3\backslash P_2 )$ which is closed
in $H_0(\mathbb{R}^3\backslash(P_1\cup P_2))$, whereby $A$ is continuous \citep[theorem 1.18]{Rudin:1991}. Accordingly the source assignment problem a) has a solution, b) this solution is unique, and c) the operator that maps the measurement to the solution is continuous,
\rev{which by Hadamard's definition \citep{Zhdanov:2015} implies that the inversion  is a well-posed problem. In case of sufficiently dense data and low signal-to-noise ratio the inverse problem therefore can be expected to be solvable in a stable and robust way. As with any inverse problem, the numerical inversion  can still be ill-conditioned, for example in cases where the the discretization is too coarse or the signal-to-noise ratio is low.
}
\section{Consequences}
This new theorem
provides a clear and astoundingly general condition for when it is theoretically possible to uniquely assign potential field signals to source regions. To give a intuitive argument why
this kind of theorem can exist, consider  the simple case when $\Omega$ and all $P_k$ are balls. The theorem now  guarantees
 that from the spherical harmonic expansion of the field on $\partial \Omega$ all individual  spherical harmonic expansions on the $\partial P_k$ are uniquely determined.
Thus the coefficients of one countably infinite basis of an harmonic function space uniquely define $N$  countably infinite coefficient sets
on $N$ infinite bases, which is no contradiction in analogy to the Hilbert-hotel paradox \citep{Hilbert:1924}.

Unique source assignment is significant in geophysics for gravimetric, or aeromagnetic  interpretation, when combined with tomographic methods like seismic imaging. It also lies the foundation for reading three-dimensional magnetic storage media.
\rev{
In rock-magnetism, after the pioneering work of \citet{Egli:2000}, different magnetic surface scanning techniques are increasingly used to infer magnetization sources and  magnetization structure inside rocks \citep[e.g.][]{Uehara:2007,Hankard:2009,Usui:2012,Lima:2013,Glenn:2017}.
In this context, the unique source-assignment theorem enables paleomagnetic reconstruction from natural particle ensembles \citep{DeGroot:2018}, because it establishes that individual dipole moments from a large number of magnetic particles in a non-magnetic matrix that are localized  by density tomography (micro-CT) can be uniquely recovered from surface magnetic field measurements.
In  \citet{DeGroot:2018} uniqueness of dipole reconstruction is individually certified by showing that for some specific set of $K$ magnetic particles found by density tomography one can find $3 K$ surface measurements such that the a $3 K\times 3 K$-matrix of the forward calculation is invertible. This proves that only a unique set of dipoles can explain the measurement. The result proven here is much more general in that it asserts, that no two different sets of multipole expansions originating from the  particles can lead to the same surface signal.
The induction proof of the unique source assignment theorem even indicates a divide-and-conquer type strategy for algorithmic implementation of an inverse reconstruction.
}

When scanning a sample in its natural-remanent magnetization state, and again after applying standard paleomagnetic stepwise demagnetization procedures, the resultant demagnetization data set can be  studied on an individual particle level to identify stable and unaltered remanence carriers.
By selecting  only optimally preserved and stable remanence carriers from a large collection of measured particles,
 reliable statistical average paleomagnetic directions or NRM intensities can be calculated for terrestrial or extraterrestrial rocks that due to unresolvable noise currently could not be used as  recorders of their magnetic history.

Further potential application areas of  unique source assignment  theorems are for example \rev{ inversion problems in EEG (electroencephalography), MEG (magnetoencephalography),  or ECG(electrocardiography), where it might enable to uniquely assign externally measured potential field  signals to previously determined brain or heart regions }
\citep{Baillet:2001,Michel:2004,Grech:2008,Michel:2012,Huster:2012}.
Empirical inversion techniques that now use numerical and statistical approaches to assess the reliability of their results \citep{Friston:2008,Castano-Candamil:2015} may profit from  unique source assignment to prior known regions.

What essentially remains impossible   is to assign signals to source regions which lie inside other source regions, \rev{like the nested balls described in section~\ref{NMA}. These cases are excluded, because they do not fulfill the condition of simple connectivity of $\mathbb{R}^3\backslash P_k$ for all $k$, which makes analytic continuation impossible.
The fact that this appears to be the only obstruction to unique reconstruction} provides a new incentive and direction to study potential field measurement techniques in combination with {\em a priori} source localization  to recover
a maximum of information about the spherical harmonic expansion of the individual source regions.

\section*{Acknowledgments}
We wish to thank M. Zhdanov (University of Utah), M. Kunze (Universit\"at zu K\"oln) and R. Egli (ZAMG, Vienna) for helpful comments on an earlier version of the manuscript.

\bibliographystyle{plainnat}
\bibliography{lit}

\newpage

\section*{Supplementary information}

\begin{Kellogg}
If $T_1$ and $T_2$ are two domains with common points, and
if $U_1$ is harmonic  in $T_1$ and $U_2$ in $T_2$, these functions coinciding at the
common points of $T_1$ and $T_2$, then they define a single function, harmonic
in the domain $T$ consisting of all points of $T_1$ and $T_2$.\citep{Kellogg:1929}
\end{Kellogg}

\begin{NMA2B}
Let $\Omega \subset \mathbf{R}^3$ be open and $\partial \Omega$ a smooth compact manifold and
 $P_1, P_2  \subset \Omega$ be disjoint balls, then
$P_1$ and $P_2$ have the {\em No-Mutual-Annihilator} property with respect to $\Omega$.
\end{NMA2B}

\begin{proof}
If there exists a mutual annihilator
$$\rho \in {\rm Ann}( P_1 \cup P_2) \backslash  ( {\rm Ann} (P_1) \oplus {\rm Ann} (P_2)),$$
then there are two nonzero  functions $\rho_1,\rho_2 \in  L_1(\Omega)$ with  ${\rm supp}\,\rho_1 \subset P_1$, ${\rm supp}\,\rho_2 \subset P_2$, and  $ \rho = \rho_1-\rho_2$, such that the non-zero normal derivatives of their potentials $\frac{\partial \Phi_1}{\partial n}, \frac{\partial \Phi_2}{\partial n}$ are identical on $\partial\Omega$. Because  the solution of the Neumann problem for zero-gauged harmonic functions is unique, $\Phi_1=\Phi_2$ on $\mathbf{R}^3\backslash \Omega $.
Because $P_1,P_2$ are disjoint
 $\mathbb{R}^3\backslash \overline{P_1 \cup   P_2}$ is an open simply connected set and the harmonic functions $\Phi_1,\Phi_2$
are defined on $\mathbb{R}^3\backslash \overline{P_1 \cup   P_2}$, and equal on the nonempty open set $\mathbb{R}^3\backslash \overline{\Omega}$.
Because every harmonic function is analytic, this implies $\Phi_1~=~\Phi_2$ on
$\mathbb{R}^3\backslash \overline{P_1 \cup   P_2}$\citep[theorem 1.27]{Axler:2001}\\
For the potential $\Phi_1$ all sources lie inside $P_1$ and  $\frac{\partial \Phi_1}{\partial n}$ on $\partial P_2$ is uniquely defined. By Gauss theorem \citep{Gauss:1877,Backus:1996}, the spherical harmonic expansion of $\Phi_1$ on $\partial P_2$ is uniquely defined from
$\frac{\partial \Phi_1}{\partial n}$ on $\partial P_2$ and thus only contains terms related to external sources, because ${\rm supp}\,\rho_1$ is outside of $\partial P_2$.
On the other hand $\frac{\partial \Phi_1}{\partial n}~=~\frac{\partial \Phi_2}{\partial n}$ on
$\partial P_2$, and the spherical harmonic expansion of $\Phi_2$ on $\partial P_2$  has only Gauss coefficients from inner sources, because ${\rm supp}\,\rho_2$ is inside of $\partial \Omega_2$.
Because  a non-zero potential cannot at the same time  have only inner sources and only outer sources, a mutual annihilator cannot exist.
\end{proof}

\label{lastpage}

\end{document}